\providecommand{\tabularnewline}{\\}
\theoremstyle{plain}
\newtheorem{assumption}{\protect\assumptionname}
\theoremstyle{remark}
\newtheorem{claim}{\protect\claimname}
\providecommand{\assumptionname}{Assumption}
\providecommand{\claimname}{Claim}
\begin{document}
\title{An Experiment on Network Density and \\ Sequential Learning\thanks{We thank the editors and two anonymous referees, J. Aislinn Bohren,
Jetlir Duraj, Ben Enke, Drew Fudenberg, Ben Golub, Jonathan Libgober,
Margaret Meyer, Matthew Rabin, Ran Spiegler, and Tomasz Strzalecki
for useful comments. Financial support from the Eric M. Mindich Research
Fund for the Foundations of Human Behavior is gratefully acknowledged.
Kevin He thanks the California Institute of Technology for hospitality
when some of the work on this paper was completed.}}
\author{Krishna Dasaratha\thanks{University of Pennsylvania. Email: \texttt{\protect\href{mailto:krishnadasaratha\%40gmail.com}{krishnadasaratha@gmail.com}}}
\and Kevin He\thanks{University of Pennsylvania. Email: \texttt{\protect\href{mailto:hesichao\%40gmail.com}{hesichao@gmail.com}}}}
\date{{\normalsize{}}%
\begin{tabular}{rl}
First version: & September 4, 2019\tabularnewline
This version: & May 19, 2021\tabularnewline
\end{tabular}}
\maketitle
\begin{abstract}
{\normalsize{}\thispagestyle{empty}
\setcounter{page}{0}}{\normalsize\par}

We conduct a sequential social-learning experiment where subjects
each guess a hidden state based on private signals and the guesses
of a subset of their predecessors. A network determines the observable
predecessors, and we compare subjects' accuracy on sparse and dense
networks. Accuracy gains from social learning are twice as large on
sparse networks compared to dense networks. Models of naive inference
where agents ignore correlation between observations predict this
comparative static in network density, while the finding is difficult
to reconcile with rational-learning models. 
\end{abstract}
\begin{flushleft}
{\small{}\newpage}{\small\par}
\par\end{flushleft}

\interfootnotelinepenalty=10000
\renewcommand*\&{and}

\section{\label{sec:Introduction}Introduction}

In many economic situations, people form beliefs based on others'
actions. In these settings, agents typically do not observe all members
of the society, but only a select subset \textemdash{} namely, their
neighbors in an underlying social network. How the structure of this
observation network affects learning outcomes is a fundamental question
for understanding social learning. While an extensive theoretical
literature has explored this question for both naive and rational
agents (e.g., \citealp*{golub2010naive,acemoglu2011bayesian,golub2012homophily}),
much less is known empirically.

Density is one of the most basic properties of a network. How do learning
patterns differ between sparse networks, where agents usually observe
very few neighbors, and dense networks, where agents generally have
abundant social information? On denser networks, agents observe more
predecessors (both directly and indirectly), so their actions can
incorporate the private signals of more individuals. But whether this
leads to more accurate learning ultimately depends on how society
aggregates these signals. Predecessors' actions can be correlated
by their common neighbors, so this aggregation may be difficult.

In this work, we conduct an experiment to compare social-learning
outcomes on sparse and dense networks. We study a sequential social-learning
environment where agents on an observation network each guess a hidden
state. We find that although later agents have fewer observations
on sparser networks, they nevertheless learn substantially better
on sparse networks than dense networks.

We place subjects into groups of 40 who act in order. Each group lives
on a social network, with randomly-generated links that determine
each subject's observations. Each subject has a 25\% chance of observing
each predecessor in the sparse treatment and a 75\% chance in the
dense treatment (and subjects know these probabilities). A hidden
binary state is drawn for each group. On her turn, each subject must
guess the state using her private signal and the past guesses of the
predecessors she observes. Subjects were paid for accuracy.

Prior to data collection, we pre-registered a measure of long-run
learning accuracy: the fraction of the final 8 subjects in the group
who correctly guess the state. Comparing this measure on 130 sparse
networks versus 130 dense networks, we find that denser networks lead
to worse learning accuracy. In dense networks, the average accuracy
of the last 8 subjects improves on the autarky benchmark (i.e., the
average accuracy if no one can observe others' actions) by 5.7\%,
but this improvement is 12.6\% in sparse networks. Thus, the long-run
accuracy gains from social learning are twice as large in the sparse
treatment as in the dense treatment ($p$-value 0.0239).

In addition to its direct implications about the role of network density
in social learning, this finding provides indirect evidence supporting
models of \emph{naive inference} in which agents neglect the correlations
among their social observations (as in \citealp{eyster2010naive}).
Motivated by a theoretical result from \citet{dasaratha2017network},
we compute predictions of the naive model. Later agents exhibit higher
accuracy on sparse networks than dense networks in this model, as
in our experimental evidence. The basic intuition is that an agent
with correlation neglect ends up placing too much weight on the actions
of the first few subjects in the same group, as these actions commonly
influence many of the agent's predecessors. When the network is denser,
this over-weighting is more severe and so naive agents' guesses are
less accurate in the long run.

On the other hand, our experimental findings are inconsistent with
the rational social-learning model. \citet*{acemoglu2011bayesian}'s
results imply that rational agents learn asymptotically in environments
matching our experimental setup. We adapt their methods to provide
lower bounds on the accuracy of rational agents 33 through 40 in the
sparse and dense treatments. These bounds imply that rational agents'
accuracy cannot improve substantially from the dense-network treatment
to the sparse-network treatment \textemdash{} in particular, the rational
model does not predict a doubling of accuracy gain.

Our data also show that network density has no statistically significant
effect on the \emph{overall} accuracy averaged across all 40 subjects
in each group. This is because dense networks increase the accuracy
of subjects who move early in the group, even though they lower the
accuracy of subjects who move later. This reversal of the accuracy
ranking between sparse and dense networks over the course of social
learning is another prediction of naive inference.

Finally, to provide additional evidence that learning is worse on
denser networks because subjects fail to account for correlation,
we conduct a variant of the experiment where subjects observe neighbors
who make conditionally independent guesses. The setup is the same
as in the main experiment, except the first $32$ agents in each group
only observe their own private signals, while the final 8 agents randomly
observe some of the initial $32$ agents. For the latter subjects,
average guess accuracy is $68.2\%$ when there is a $25\%$ chance
of observing each predecessor and $72.5\%$ when there is a $75\%$
chance of observing each predecessor. The extra observations in dense
networks improve guess accuracy when those observations are not correlated
by common social information.

\subsection{Related Literature}

Our experimental results add to a growing body of evidence that humans
do not properly account for correlations in social-learning settings.
\citet{enke2016correlation} show that correlation neglect is prevalent
even in simple environments where the observed information sources
are mechanically correlated. In a field experiment where agents interact
repeatedly with the same set of neighbors, \citet*{chandrasekhar2015testing}
find agents fail to account for redundancies.

Most closely related to the present work, the laboratory games in
\citet*{eyster2015experiment} and \citet*{mueller2015general} directly
evaluate behavioral assumptions matching ours. \citet*{eyster2015experiment}
find that on the complete observation network, many agents choose
the best response assuming predecessors are rational while some participants
exhibit redundancy neglect. On a more complex network the naive model
matches more observations than the rational model, and there is little
anti-imitation (which would be required for correct Bayesian inference,
as shown in \citealp{eyster2014extensive}).\footnote{In the complex network, four agents move in each period after observing
predecessors from previous periods.} \citet*{mueller2015general} find most observations are consistent
with the behavioral assumption we study (which they call quasi-Bayesian
updating) in a setting where agents have limited information about
the network. These experiments suggest naiveté may be more likely
in settings where agents either have a limited knowledge of the true
network or the network is known but very complicated. In these settings,
the correct Bayesian belief given one's observations can be far from
obvious, so agents are more likely to resort to behavioral heuristics.

Unlike this previous work, our experiment tests the comparative statics
predictions of naive and rational learning with respect to variations
in the learning environment. This allows us to cleanly test redundancy
neglect against rational updating. Our approach allows us to focus
on long-term learning outcomes\textemdash which are the welfare-relevant
metrics as we consider changes in the environment\textemdash instead
of solely on measuring individual behavior.

Several experiments in this literature, including \citet{grimm2014experiments},
\citet*{chandrasekhar2015testing}, and \citet*{mueller2015general},
test social learning outcomes under multiple network structures. In
these works, changes in network structure largely serve as a robustness
check for claims about subject behavior. By considering larger networks
and varying density, we show network structures play an important
role in learning outcomes and exploit this variation to better understand
behavior.

\section{\label{sec:experiment_theory} Theoretical Motivation}

\subsection{Model}

The state of the world $\omega\in\{0,1\}$ takes one of two possible
values with equal probabilities. The set of agents is indexed by $i\in\mathbf{\mathbb{N}}$.
Agents move in the order of their indices, each acting once.

On her turn, each agent $i$ observes a private signal $s_{i}\in\mathbb{R}$,
as well as the actions of some previous agents. Then, $i$ chooses
an action $a_{i}\in\{0,1\}$ to maximize the probability that $a_{i}=\omega$
given her belief about $\omega$.

Private signals $(s_{i})$ are i.i.d$.$ and Gaussian conditional
on the state of the world. When $\omega=1$, $s_{i}\sim\mathcal{N}(1,\sigma^{2})$.
When $\omega=0$, $s_{i}\sim\mathcal{N}(-1,\sigma^{2})$. Here $\sigma^{2}>0$
is the conditional variance of the private signal.

In addition to her signal, each agent $i$ observes the action of
each predecessor with probability $q$. These observations are i.i.d$.$
Independence of observations means that whether one agent observes
a certain predecessor does not depend on whether a different agent
observes the same predecessor. Agents observed by $i$ are called
the \emph{neighbors} of $i$, and the sets of neighbors define a (random)
directed network.

We compare two kinds of agents: rational agents and naive agents.
Rational agents play the unique perfect Bayesian equilibrium. Naive
agents optimize given the following misspecified beliefs:
\begin{assumption}
[Naive Inference Assumption]\textbf{ }\label{assu:behavioral}Each
agent wrongly believes that each predecessor chooses an action to
maximize her expected payoff based solely on her private signal, and
not on her observation of other agents.
\end{assumption}
Equivalently, naive agents believe that each of their neighbors observe
no other agents. Besides the error in Assumption \ref{assu:behavioral},
naive agents are otherwise correctly specified and optimize their
expected utility given their mistaken beliefs.

Assumption \ref{assu:behavioral} was introduced in a sequential-learning
setting where agents observe all predecessors by \citet{eyster2010naive}.
Their work refers to this form of inference as ``best-response trailing
naive inference'' (BRTNI).

\subsection{Naive and Rational Behavior}

\citet{dasaratha2017network} suggest an empirical test for the naive
inference assumption: in the context of sequential learning on uniform
random networks, does increasing the link-formation probability $q$
cause more inaccurate long-run beliefs? In this paper, we experimentally
test this comparative static in networks of $40$ agents by comparing
learning outcomes in sparse networks (where $q=\frac{1}{4}$) and
dense networks (where $q=\frac{3}{4}$).

The naive-learning model and the rational-learning model make competing
predictions about this comparative static. The intuition for naive
learning comes from \citet{dasaratha2017network}, which suggests
that overweighting due to correlation neglect is more severe on dense
networks.\footnote{\citet{dasaratha2017network} consider agents with a continuous action
space, but we implemented a binary action space in the experiment
for clarity. We felt it would be easier for subjects to make a binary
choice than to accurately report their exact belief.} We do not expect human subjects to behave exactly according to Assumption
\ref{assu:behavioral} \textemdash{} for example, the meta-analysis
of \citet{weizsacker2010we} reports that laboratory subjects in sequential
learning games suffer from autarky bias, underweighting their social
observations relative to the payoff-maximizing strategy. However,
the comparative static prediction of the naive model remains robust
even after introducing any fraction of autarkic agents.\footnote{See the Appendix of a previous version of \citet{dasaratha2017network},
available at \url{https://arxiv.org/pdf/1703.02105v5.pdf}.}

The prediction of the naive model is shown in Figure \ref{fig:Learning-on-Erdos-Renyi},
which plots the probabilities that each of the 40 naive agents will
correctly guess the state in sparse and dense networks with $\sigma=2.$
Because naive agents' actions only depend on the number of their predecessors
choosing each of the two actions and not the order of these actions,
recursively calculating the distributions of actions is computationally
feasible (see Appendix \ref{subsec:Performance-of-Naive} for details).
As shown in Figure \ref{fig:Learning-on-Erdos-Renyi}, early naive
agents do worse under $q=\frac{1}{4}$ than $q=\frac{3}{4}$ because
there is very little social information, but the comparison quickly
reverses as we examine later naive agents. 
\begin{figure}
\begin{centering}
\includegraphics[scale=0.6]{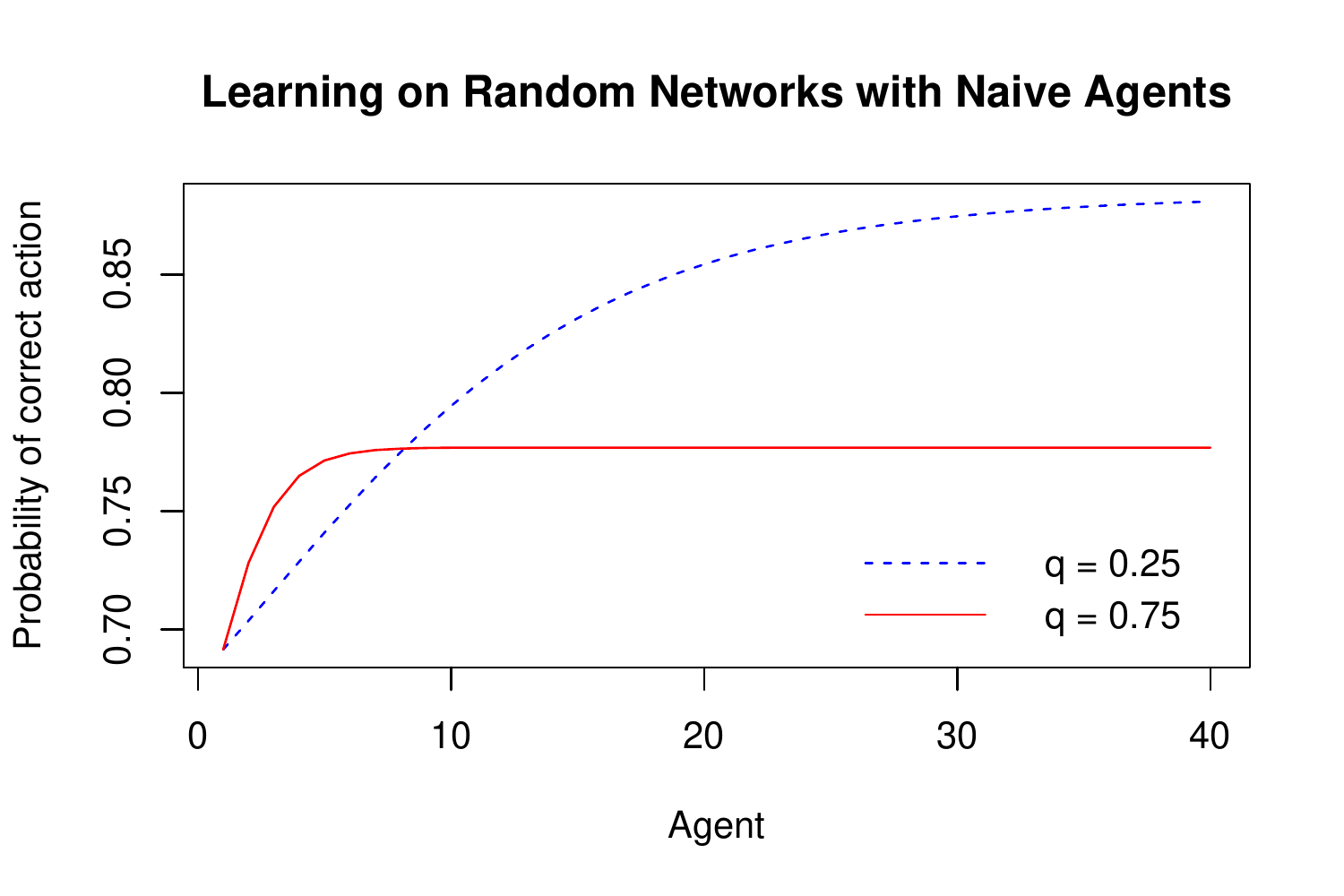}
\par\end{centering}
\caption{\label{fig:Learning-on-Erdos-Renyi}Learning accuracy on random networks
with $40$ naive agents, binary actions, and $\sigma^{2}=4$. Dashed
blue and solid red curves show the expected accuracy of different
agents on networks with link probabilities $q=\frac{1}{4}$ and $q=\frac{3}{4}$,
respectively.}
\end{figure}

On the other hand, the rational-learning model predicts that later
agents will have either similar or greater accuracy on the dense network
compared to the sparse network. \citet*{acemoglu2011bayesian}'s results
imply that in an environment matching our experimental setup, rational
agents will learn the true state in the long-run, regardless of the
network density. We can confirm that 40 rational agents are enough
to approach this asymptotic learning limit when $q=\frac{3}{4}$.
To do this, we compute a lower bound for the probability of correct
learning for each agent $i$ in the dense network of our experiment,
assuming all agents are rational Bayesians (see Appendix \ref{sec:Bounding-Performance-of}
for details). This lower bound is based on (suboptimal) agent strategies
that only depend on  own private signals and the action of just one
neighbor, as in the neighbor-choice functions in \citet*{lobel2015information}.
This exercise shows that the $33^{\text{rd}}$ rational agent is correct
at least 96.8\% of the time on dense networks, with the lower bound
on accuracy continuing to increase up to the $40^{\text{th}}$ agent,
who is correct at least $97.5\%$ of the time. In addition to suggesting
that the asymptotic result of \citet*{acemoglu2011bayesian} very
likely holds by the $40^{\text{th}}$ agent, the fact that this lower
bound for accuracy on the dense network is so close to perfect learning
proves the $40^{\text{th}}$ rational agent could not perform substantially
better on the sparse network,\footnote{We prove these bounds because we are not aware of a computationally
feasible method of calculating or simulating the probability that
rational agents are correct. \citet*{rahimian2014non} show computing
rational actions in another social learning environment is NP-hard.} contrary to the predicted improvement for the $40^{\text{th}}$ naive
agent shown in Figure \ref{fig:Learning-on-Erdos-Renyi}.

Intuitively one might also expect more connections to also help rational
agents in the short- and medium-run as they can adjust for potential
redundancies in information. For example, on the complete network
with continuous actions, rational agents can back out the private
signals of all predecessors by observing their actions, so every agent
$i$ does better on the complete network than on any sparser network
structure. We note, however, that exact comparative statics of the
rational model or variants are not known on random networks.

We experimentally test the competing predictions of the naive and
the rational models about how long-run accuracy varies with network
density. We thus provide indirect evidence for the naive inference
assumption, complementing the direct measurement of behavior in \citet*{eyster2015experiment}
and \citet*{mueller2015general}.

Beyond providing another form of evidence, our experiment also contributes
to understanding social learning by using the welfare-relevant outcome,
namely the long-run accuracy of actions, as the dependent variable.
Even if individual behavior tends to match redundancy neglect models
in simple or stylized settings, one might worry that the theoretical
implications of said models concerning aggregate learning need not
hold in practice for complex environments. For a policymaker who can
alter the observation network, for instance, experiments using welfare-relevant
outcomes as their dependent variables give more explicit guidance
as to the consequences of different policies.

\section{\label{subsec:Experimental-Design}Experimental Design}

We conducted our experiment on the online labor platform Amazon Mechanical
Turk (MTurk) using Qualtrics survey software.

We pre-registered our experimental protocol and regression specification
prior to the start of the experiment in August 2017. Our pre-registration
included the target sample size (which was met exactly) and the dependent
variable to measure the accuracy of social learning. The pre-registration
document can be found on the registry website at \texttt{\href{https://aspredicted.org/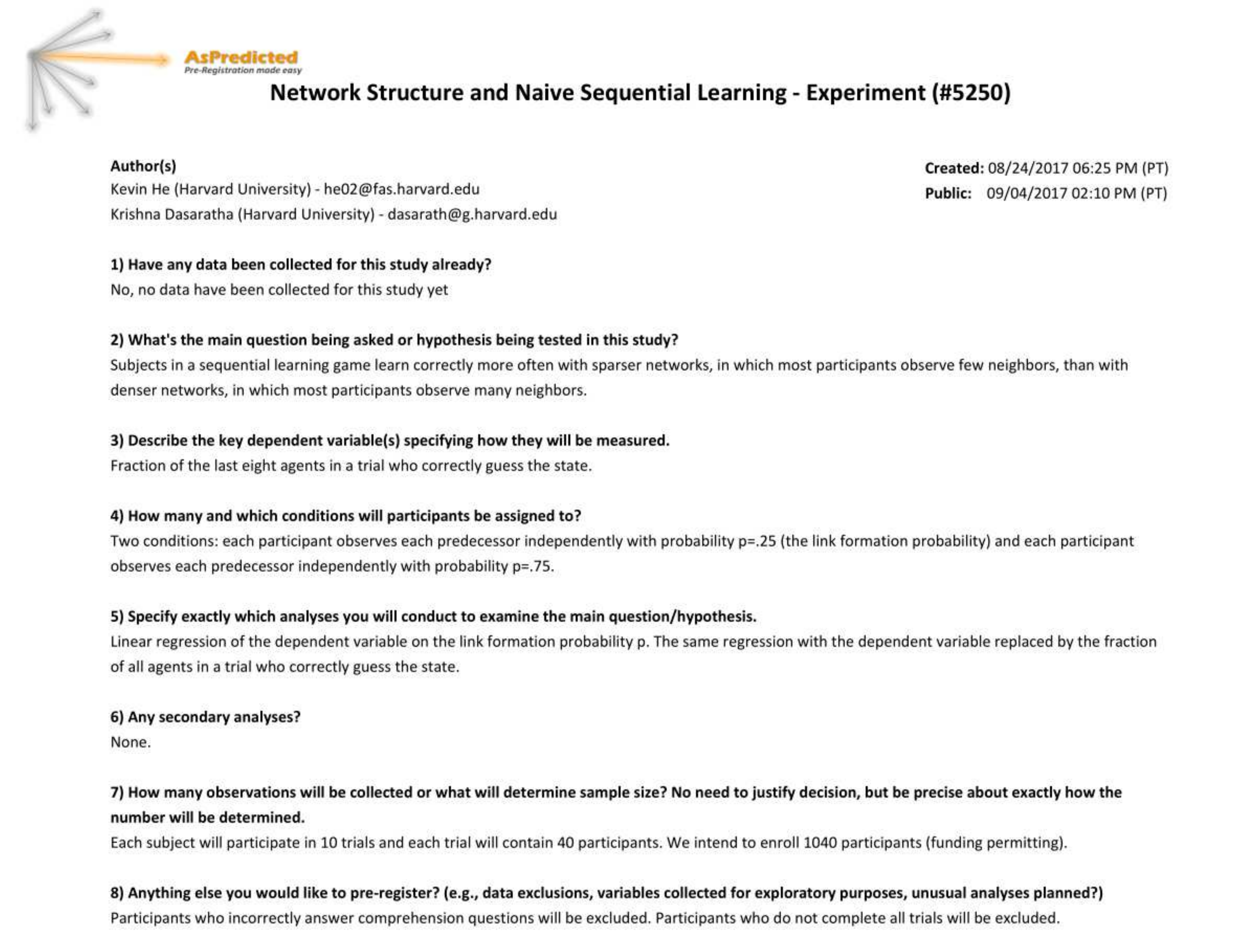}{https://aspredicted.org/yp6eq.pdf}}
and is also included in the Online Appendix.

We recruited $1040$ subjects. To be recruited, each subject must
correctly answer three comprehension questions (which were scenarios
in the game with a dominant choice). An additional $375$ MTurk users
incorrectly answered one or more comprehension questions and were
not allowed to participate in the experiment, based on the pre-registered
exclusion criteria. These excluded users were $26.5\%$ of the potential
subjects. The experiment was carried out in fall 2017.

In addition to comprehension questions, we restricted to subjects
located in the United States who had completed at least $50$ previous
MTurk tasks with a lifetime approval rate of at least $90\%$. Subjects
were not permitted to participate multiple times in the experiment.
There were at most $15$ subjects who did not complete all trials,
implying a completion rate of at least $98.5\%$. These non-completers
were excluded and replaced by new subjects.

Each \emph{trial} consisted of 40 agents who were asked to each make
a binary guess between two \emph{a priori }equally likely states of
the world, L (for left) and R (for right). The states were color-coded
to make instructions and observations more reader-friendly. Agents
are assigned positions in the sequence and move in order. Each MTurk
subject participated in 10 trials, all in the same position (depending
on when they participated in the experiment). The grouping of subjects
into trials was independent across trials. Subjects received $\$0.25$
for completing the experiment and $\$0.25$ per correct guess, for
a maximum possible payment of $\$2.75$. Subjects received no feedback
about the accuracy of their guesses until they were paid at the conclusion
of the experiment. Subjects ordinarily took less than $10$ minutes
to complete their participation and earned \$2.08 on average, so the
incentives were quite large for an MTurk task.

In each trial, every agent received a private signal, which had the
Gaussian distribution $\mathcal{N}(-1,4)$ in state L and the Gaussian
distribution $\mathcal{N}(1,4)$ in state R. These distributions were
presented visually in the instructions. Along with the value of their
signal, subjects were told the probability of each state conditional
on only their private signal.

Each trial was also associated with a density parameter, either $q=\frac{1}{4}$
or $q=\frac{3}{4}.$ A random network was generated for each trial
by linking each agent with each predecessor with probability $q$.
Each MTurk subject was assigned into either the ``sparse'' or the
``dense'' treatment, and then placed into 10 trials either all with
$q=\frac{1}{4}$ or all with $q=\frac{3}{4}.$ So there were 520 subjects
and 130 trials for each treatment. Agents were told the actions of
each linked predecessor and the link probability $q$ (but not the
full realized network, which could not be presented succinctly).

In each trial, agents viewed their private signal and any social observations
and were asked to guess the state. States, signals, and networks were
independently drawn across trials. Experimental instructions and an
example of a choice screen are shown in the Online Appendix.

\section{Results}

Let $y_{i,j}$ be the indicator random variable with $y_{i,j}=1$
if agent $i$ in trial $j$ correctly guesses the state, $y_{i,j}=0$
otherwise. Define $\tilde{y}_{j}:=\frac{1}{8}\sum_{i=33}^{40}y_{i,j}$
as the fraction of the last 8 agents in trial $j$ who correctly guess
the state. We test learning outcomes for the final 8 agents because
welfare depends on long-run learning outcomes in large societies and
these agents better approximate long-run outcomes. By using only her
private signal, an agent can correctly guess the state 69.15\% of
the time.\footnote{In fact, subjects in the first position (who have no social observations)
correctly use their private signals 93.8\% of the time.} We call $\tilde{y}_{j}-0.6915$ the \emph{gain from social learning}
in trial $j$, as this quantity represents improvement relative to
the autarky benchmark.

We find that the average gain from social learning is 8.73 percentage
points for the $q=\frac{1}{4}$ treatment and 4.12 percentage points
for the $q=\frac{3}{4}$ treatment. Social learning improves accuracy
on the sparse networks by twice as much as on the dense networks.
To test for statistical significance, we consider the regression 
\[
\tilde{y}_{j}=\beta_{0}+\beta_{1}q_{j}+\epsilon_{j}
\]
where $q_{j}\in\{\frac{1}{4},\frac{3}{4}\}$ is the network density
parameter for trial $j$. Recall that each subject was assigned into
ten random trials with the same network density and in the same sequential
position. This means for two different trials $j^{'}\ne j^{''}$,
the error terms $\epsilon_{j^{'}}$ and $\epsilon_{j^{''}}$ are close
to independent since there are likely very few subjects who participated
in both trials.

We estimate $\beta_{1}=-0.092$ with a $p$-value of 0.0239 (see Table
\ref{tab:Regression}). The results are the same whether we use robust
standard errors or not. These findings are consistent with naive updating
but not with rational updating, as discussed in Section \ref{sec:experiment_theory}.\footnote{We pre-registered average accuracy in the last 8 agents (i.e last
20\% of agents) as the dependent variable for the experiment, but
the regression result is robust to other definitions of $\tilde{y}_{j}$.
When $\tilde{y}_{j}$ encodes average accuracy among the last $m$
agents for any $4\le m\le12$ (i.e. between last 10\% and last 30\%
of the agents), the estimate for $\beta_{1}$ remains negative.}

\begin{table}
\begin{center} \begin{tabular}{l*{2}{c}} \hline\hline                                          &\multicolumn{1}{c}{FractionCorrect}\\ \hline NetworkDensity      &         -0.0923\\                     &   (0.0406)\\ [1em] Constant            &       0.802\\                     &       (0.0218)\\ \hline Observations        &             260\\ Adjusted \(R^{2}\)  &        0.016\\ \hline\hline \end{tabular} \end{center}

\caption{\label{tab:Regression} Regression results for the effect of network
density on learning outcomes (with robust standard errors).}
\end{table}

This difference in the gains from social learning is not driven by
different rates of autarky among the two treatments for the last 8
agents. We say an agent \emph{goes against her signal} if she guesses
L when her signal is positive or guesses R when her signal is negative.
Within the last 8 rounds, there are 138 instances of agents going
against their signals in the $q=\frac{1}{4}$ treatment, which is
very close to the 136 instances of the same under the $q=\frac{3}{4}$
treatment. However, when agents go against their signals in the last
8 rounds, they correctly guess the state 81.88\% of the time under
the $q=\frac{1}{4}$ treatment, but only 71.32\% of the time under
the $q=\frac{3}{4}$ treatment. This shows the observed difference
in accuracy is due to social learning being differentially effective
on the two network structures.

However, the $q=\frac{3}{4}$ treatment yields better learning outcomes
for early agents. For agents 10 through 20, the average guess accuracy
is 72.24\% under the $q=\frac{1}{4}$ treatment and 73.22\% under
the $q=\frac{3}{4}$ treatment. As such, if we replace the dependent
variable in the pre-registered regression with overall accuracy $\bar{y}_{j}:=\frac{1}{40}\sum_{i=1}^{40}y_{i,j}$,
then we do not find a statistically significant estimate for $\beta_{1}$
($p$-value of 0.663). This result is consistent with the naive-learning
model: according to the predictions of the naive model shown in Figure
\ref{fig:Learning-on-Erdos-Renyi}, early agents are more accurate
under $q=\frac{3}{4}$, but later agents are more accurate under $q=\frac{1}{4}.$
The point of overtaking happens at a later round in practice than
in theory, because our experimental subjects rely more on their private
signal than predicted by the naive model,\footnote{The overall frequency of agents going against their signals was 36.8\%
of the predicted frequency under the naive model.} consistent with the meta-analysis of \citet{weizsacker2010we}.

Our experiment was designed to compare long-run learning accuracy
on different networks instead of measuring individual behavior. We
do not directly test alternate behavioral models for two reasons.
First, given the complex signal and network structures, such tests
will be very noisy in our data. Second, because the spaces of possible
networks and actions have very high dimension, it is computationally
infeasible to determine the action that each agent would take under
common knowledge of rationality. However, in the next subsection we
provide some evidence that our findings are driven by herding under
naive inference rather than other behavioral mechanisms.

\subsection{Evidence of naive herding}

In this section, we present three pieces of evidence suggesting that
naive herding is the mechanism responsible for the difference in learning
accuracy between the two treatments.

\textbf{(1)} \textbf{Distribution of overall accuracy}. Figure \ref{fig:Histograms}
in Appendix \ref{sec:figs_and_tables} plots the distributions of
subjects who correctly guess the state in the $q=\frac{1}{4}$ and
$q=\frac{3}{4}$ treatments, across different trials. Compared to
the distribution under $q=\frac{1}{4}$, the distribution under $q=\frac{3}{4}$
has more extreme values and a larger standard deviation (11.36 percentage
points versus 9.12 percentage points). This is suggestive evidence
for naive herding. With denser networks, we simultaneously find more
trials where agents do very badly overall (from herding on the wrong
state) and more trials where agents do very well overall (from herding
on the correct state).

\textbf{(2) Effect of misleading early signals on the accuracy of
later agents}. Call a private signal \emph{misleading} if it is positive
while the state is L, or if it is negative while the state is R. If
naive herding is the mechanism, we would expect misleading signals
received by early agents to be more harmful for eventual learning
accuracy on denser networks than on sparser networks. On the other
hand, a different behavioral mechanism based on the salience of the
visible decisions would suggest that early misleading signals are
more harmful on sparse networks, since each visible decision is more
salient when agents have fewer social observations. To test the naive
herding mechanism, we expand our baseline regression to include two
additional regressors: the number $m_{j}$ of the first fifth of agents
who receive misleading signals in trial $j$, and its interaction
effect with network density. That is, we estimate
\[
\tilde{y}_{j}=\beta_{0}+\beta_{1}q_{j}+\beta_{2}m_{j}+\gamma(q_{j}m_{j})+\epsilon_{j}.
\]
The difference in the marginal effect of a misleading early signal
for learning accuracy on the dense network $(q=\frac{3}{4})$ versus
on the sparse network $(q=\frac{1}{4})$ is $\frac{1}{2}\gamma$ in
the above specification.

As reported in Table \ref{tab:Regression_misleading} in Appendix
\ref{sec:figs_and_tables}, we find $\gamma=0.05$ with a $p$-value
of 0.0923. This means each misleading signal among the first fifth
of agents harms the average accuracy of the last fifth of agents in
the same trial by an extra 2.5 percentage points in dense networks
compared to sparse networks.

\textbf{(3) Average uncertainty}. Based on simulation evidence, we
expect naive agents to exhibit more agreement on denser networks.
To test this prediction in the data, we consider for each trial a
set of 30 moving windows centered around periods 6, 7, ... 35, with
each window spanning 11 consecutive periods. For each trial $j$ and
each window $w$, we compute $r_{j,w}\in\{0,\frac{1}{11},...,1\}$
as the fraction of 11 agents in the window who guessed R, and we let
$u_{j,w}:=r_{j,w}\cdot(1-r_{j,w})$ be a measure of uncertainty within
the window.\footnote{The value of $u_{j,w}$ would be unchanged if we instead defined $r_{j,w}$
as the fraction of the 11 agents in window $w$ who correctly guessed
the state.} In windows where agents exhibit a greater degree of agreement, we
will see a lower $u_{j,w}$. Under herding, we expect lower uncertainty
on denser networks, as higher density accelerates convergence to a
(possibly mistaken) social consensus. We find in the data that the
average  uncertainty across all trials and all windows is 0.165 on
dense networks and 0.178 on sparse networks. Examining uncertainty
in each of the 30 windows $w$ separately, we find average $u_{j,w}$
across trials is lower among dense networks than sparse networks for
all but 1 out of 30 windows. Numerically, the naive herding theory
predicts lower average $u_{j,w}$ on denser networks in all 30 windows.

\section{Neighbors with Conditionally Independent Actions}

In our main experiment, we find that denser networks lead to worse
social learning by later subjects. We have presented evidence suggesting
the mechanism behind this result is that subjects neglect correlation
in observed actions. To provide additional evidence for this channel,
we now test how network density affects social learning when observed
actions are conditionally independent given the state. In this section,
we will ask whether more observations help subjects whose neighbors
only have private information.

\subsection{Experimental Design}

We also pre-registered the experimental protocol and regression specification
for this second experiment, including the dependent variable to measure
the accuracy of social learning and the target sample size, prior
to the start of the experiment in November 2020. The pre-registration
document is included in the Online Appendix and may also be accessed
via the registry website at \texttt{\href{https://aspredicted.org/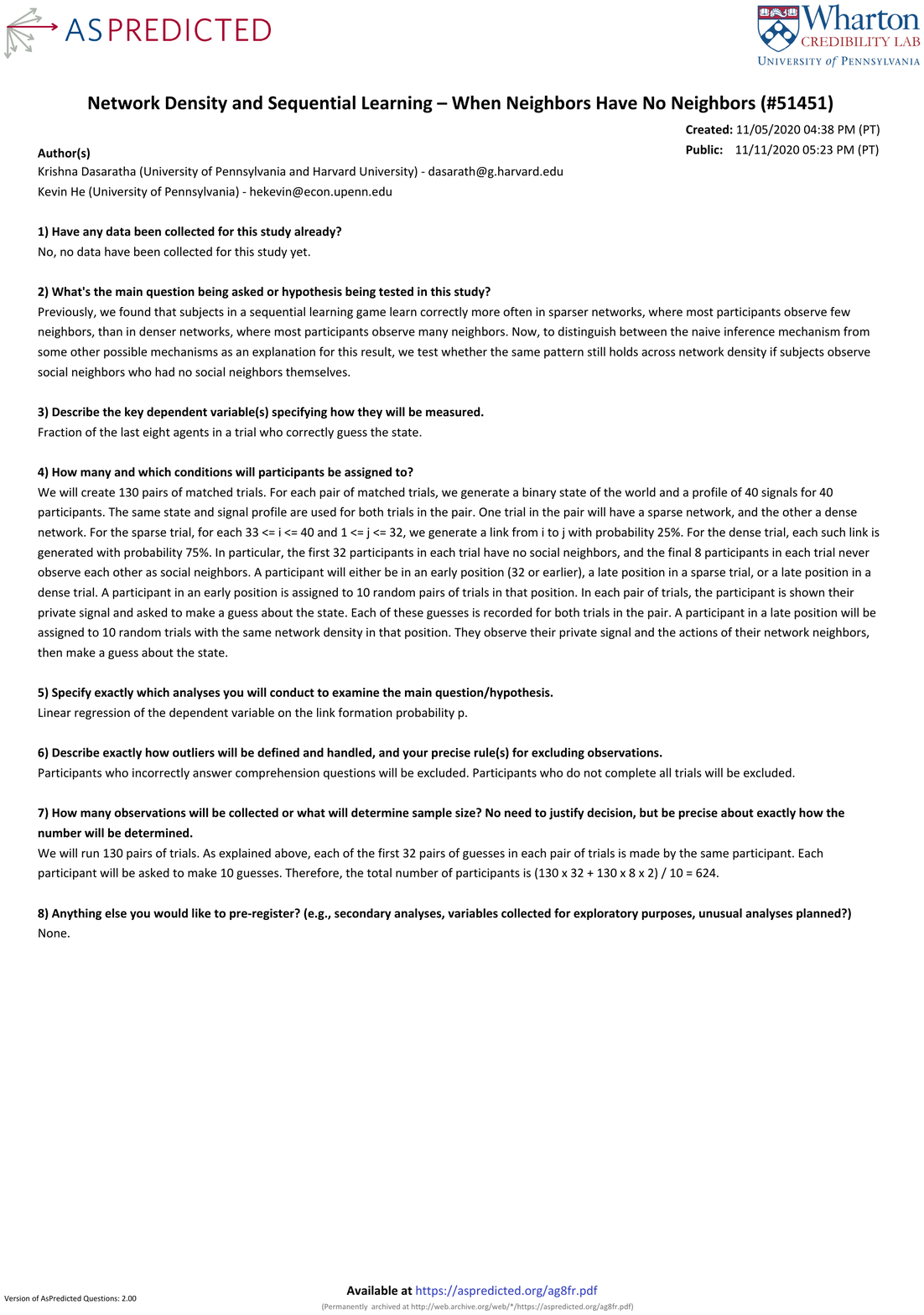}{https://aspredicted.org/ag8fr.pdf}}.

This experiment was also conducted online on MTurk. We recruited $624$
subjects, and each subject participated in $10$ trials. There were
a total of $130$ trials. To increase power, each trial included subjects
in both sparse and dense treatments. The first $32$ subjects in each
trial had no neighbors, and chose actions based only on their private
signals. Each trial also contained $8$ subjects in the sparse treatment
and $8$ subjects in the dense treatment. Subjects in the sparse treatment
observed each of the first $32$ subjects in the same trial with probability
$q=\frac{1}{4}$ while subjects in the dense treatment observed each
of the first $32$ subjects with probability $q=\frac{3}{4}$. There
were no other observations, so the actions of the observed neighbors
are always uncorrelated given the state. In particular, the subjects
after the first 32 in each trial never observe each other.

We maintained the state distribution, private signal distribution,
and action space from the main experiment. Recruitment and payment
were also the same as in the main experiment. The experimental instructions
were modified to accurately describe the social information subjects
would receive, if any. The first $32$ subjects in each trial (like
the first subject in each trial in the main experiment) were only
asked the one comprehension question that just involves private signals,
as the other comprehension questions pertain to subjects who receive
social information. Subjects earned an average of \$1.90 per person
in this second experiment.

\subsection{Results}

We find that the average accuracy is $68.2\%$ in the sparse treatment
and $72.5\%$ in the dense treatment. When subjects' neighbors only
have private information and not social information, having more neighbors
improves the accuracy of guesses.

In each trial, we will index the $8$ subjects in the sparse treatment
as $33,\hdots,40$ and the $8$ subjects in the dense treatment as
$41,\hdots,48$. Let $y_{i,j}$ be the indicator random variable with
$y_{i,j}=1$ if agent $i$ in trial $j$ correctly guesses the state,
$y_{i,j}=0$ otherwise. For each $q\in\left\{ \frac{1}{4},\frac{3}{4}\right\} $,
we define $\tilde{y}_{j}^{q}$ as the fraction of the 8 subjects in
that treatment in trial $j$ who correctly guess the state, so 
\[
\tilde{y}_{j}^{\frac{1}{4}}:=\frac{1}{8}\sum_{i=33}^{40}y_{i,j}\text{ and \ensuremath{\tilde{y}_{j}^{\frac{3}{4}}}:=\ensuremath{\frac{1}{8}\sum_{i=41}^{48}y_{i,j}}}.
\]

To test for statistical significance, we consider the regression 
\[
\tilde{y}_{j}^{q}=\beta_{0}^{uncor}+\beta_{1}^{uncor}q+\epsilon_{j,q}
\]
where $q\in\{\frac{1}{4},\frac{3}{4}\}$ is the network density parameter.
We estimate $\beta_{1}^{uncor}=0.087$ with a $p$-value of $0.0391$
(see Table \ref{tab:Regression-Independent}).

\begin{table}
\begin{center} \begin{tabular}{l*{2}{c}} \hline\hline                                          &\multicolumn{1}{c}{FractionCorrect}\\ \hline NetworkDensity      &     0.0865    \\                     &   (0.0417)\\ [1em] Constant            &       0.660\\                     &       ( 0.0229)\\ \hline Observations        &             260\\ Adjusted \(R^{2}\)  &        0.013\\ \hline\hline \end{tabular} \end{center}

\caption{\label{tab:Regression-Independent} Regression results for the effect
of network density on learning outcomes for subjects observing neighbors
with only private information (with robust standard errors).}
\end{table}

The difference in average accuracy is again driven by a difference
in the value of social information. Recall that a subject goes against
her signal if her signal is positive and she chooses L or her signal
is negative and she chooses R. Conditional on going against one's
own signal, subjects correctly guess the state $53.66\%$ of the time
in sparse treatment and $69.39\%$ of the time in dense treatment.

Guesses are in general less accurate in this follow-up experiment
than in the main experiment. The subjects in the first $32$ positions
in each trial had only one comprehension question because their decision
problems did not involve any social information. Subjects who did
not fully understand the experimental instructions may therefore have
been more likely to participate in the experiment in these positions,
producing much noisier choices that degrade later subjects' accuracy.\footnote{Subjects in the first 32 positions correctly used their private signals
only 81.7\% of the time.} There may also be differences in the MTurk subject pool compared
to the main experiment, as the second experiment was conducted three
years later.

The follow-up experiment finds that having more observations improves
accuracy when those observations are conditionally uncorrelated. This
provides additional evidence that our main result is driven by the
failure of subjects to account for correlation in observed actions,
rather than by some other mechanism that does not depend on this correlation.

\section{Concluding Discussion}

Our study provides experimental evidence on how the density of the
observation network affects people's long-run accuracy in social-learning
settings. We find that sparser networks double the accuracy gains
from social learning relative to denser networks. While the rational
model predicts correct asymptotic social learning with minimal assumptions
on the social network, we conjecture that in practice, many structural
properties of the network can substantially alter long-run accuracy.
Our empirical findings support this conjecture for the case of network
density, one of the most canonical network statistics. We leave open
the roles of other network structures as promising future work.

We have argued that our experimental results provide evidence for
inferential naiveté by analyzing a particular form of behavior (Assumption
\ref{assu:behavioral}). We conclude by discussing two ways in which
the experimental results are potentially consistent with more general
models of behavior. First, we have discussed models where all agents
are rational or all agents are naive, but a model where only some
of the agents suffer from inferential naiveté may be more realistic.
Such a model could also generate herding on incorrect beliefs, and
this herding may be more likely on denser networks. The exact details
depend on how the agents who do not suffer from inferential naiveté
reason about others' play. If these agents wrongly believe that others
are playing the perfect Bayesian equilibrium strategies, then they
will fail to correct the mistakes of naive agents. In this case, early
agents' actions can have very disproportionate influence on later
agents.

Second, Assumption \ref{assu:behavioral} is a particular form of
naive updating that assumes agents entirely neglect correlations in
neighbors' actions. Even in homogeneous populations, intermediate
forms of naive updating could also generate herding on incorrect beliefs.
Our main result suggests inferential naiveté, but does not distinguish
between alternate naive models involving some correlation neglect.

\bibliographystyle{ecta}
\bibliography{network_naive_sequential}

\appendix
\begin{center}
\textbf{\Large{}Appendix}{\Large\par}
\par\end{center}

\section{Theoretical Predictions in the Experimental Environment}

\subsection{Bounding the Performance of Rational Agents\label{sec:Bounding-Performance-of}}

Consider 40 rational agents on a random network where each agent is
linked to each of her predecessors $\frac{3}{4}$ of the time, i.i.d.
across link realizations. Agents know their own neighbors but have
no further knowledge about the realization of the random network.
The signal structure and payoff structure match the experimental design
in Section \ref{subsec:Experimental-Design}.

We provide a lower bound for the accuracy of agents 33 through 40
in the unique PBE of the social-learning game. We first show that
when every player uses the equilibrium strategy, all agents learn
at least as well as when everyone uses any \emph{constrained strategy}
that chooses an action based on only own private signal and the action
of the most recent neighbor. We then exhibit payoffs under one such
strategy, which give a lower bound on rational performance.

Fix an arbitrary sequence of constrained strategies $(\sigma_{i})$
where $\sigma_{i}:S_{i}\times\{0,1,\emptyset\}\to\Delta(\{0,1\})$
is only a function of $i$'s signal $s_{i}$ and the action of the
most recent predecessor that $i$ observes ($\sigma_{i}(s_{i},\emptyset)$
refers to $i$'s play if $i$ does not observe any predecessor). Let
$a_{i}$ denote $i$'s (random) action induced by this sequence of
strategies. Let $a'_{i}$ denote $i$'s (random) action when all agents
use the PBE strategy.
\begin{claim}
For all $i$, $\mathbb{P}[a'_{i}=\omega]\mathbb{\geq P}[a_{i}=\omega].$
\end{claim}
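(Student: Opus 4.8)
The plan is to prove the Claim by strong induction on the agent index $i$, carrying along the auxiliary fact that each PBE action $a'_j$ is \emph{symmetric}, i.e.\ $\mathbb{P}[a'_j=1\mid\omega=1]=\mathbb{P}[a'_j=0\mid\omega=0]$. This symmetry is inherited from the symmetry of the primitives (the uniform prior on $\{0,1\}$ and the mirror-image signal laws $\mathcal{N}(1,\sigma^2)$ and $\mathcal{N}(-1,\sigma^2)$) together with the inductive symmetry of predecessors' equilibrium actions. Note first that, since agent $i$'s action affects only whether her own guess matches $\omega$ and not $\omega$ or any already-taken action, her equilibrium problem is the single-agent problem of choosing $a'_i$ to maximize $\mathbb{P}[a'_i=\omega\mid \mathcal{I}_i]$ given her information $\mathcal{I}_i$. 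The base case is the event that $i$ observes no predecessor: there both $a_i$ and $a'_i$ are functions of $s_i$ alone, and since $a'_i$ is by definition the accuracy-maximizing guess given $s_i$ while $\sigma_i(s_i,\emptyset)$ is merely one feasible rule, $\mathbb{P}[a'_i=\omega]\ge\mathbb{P}[a_i=\omega]$ on this event.

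For the inductive step I would condition on agent $i$'s realized neighborhood and, on the event that it is nonempty, let $k:=\max\{j<i:\ i\text{ observes }j\}$ be $i$'s most recent neighbor. A preliminary observation is that conditioning on $i$'s neighborhood does not disturb the joint law of $(\omega,a_k,a'_k)$: agent $i$'s incoming links are i.i.d.\ and independent of all signals and of every link among agents $\le k$, whereas $a_k$ and $a'_k$ are measurable with respect to those signals and links alone. Hence the inductive hypothesis $\mathbb{P}[a'_k=\omega]\ge\mathbb{P}[a_k=\omega]$ and the symmetry of $a'_k$ persist after conditioning.

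The heart of the argument is to show that, conditional on this neighborhood, $\mathbb{P}[a'_i=\omega]\ge\mathbb{P}[a_i=\omega]$. The right-hand side equals $\mathbb{P}[\sigma_i(s_i,a_k)=\omega]$, which is at most the accuracy of the \emph{optimal} matching rule based on the pair $(s_i,a_k)$. The left-hand side is at least the accuracy of the optimal matching rule based on $(s_i,a'_k)$, because in the PBE agent $i$ sees (at least) her own signal and her most recent neighbor's equilibrium action and plays the accuracy-maximizing guess given her full information. So it suffices to compare the two single-agent decision problems and show that the optimal matching accuracy from $(s_i,a'_k)$ is at least that from $(s_i,a_k)$. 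Since $s_i$ is conditionally independent of $(a_k,a'_k)$ given $\omega$, this reduces to a statement about the binary experiments $a'_k$ and $a_k$ alone. When $a_k$ is itself symmetric, ``more accurate'' and ``Blackwell more informative'' coincide for symmetric binary experiments, so the inductive accuracy inequality delivers exactly the required comparison, and the step closes; integrating over neighborhoods and combining with the base case completes the induction.

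The step I expect to be the main obstacle is precisely this last comparison when the constrained action $a_k$ is \emph{asymmetric}: a symmetric experiment with higher matching accuracy need not Blackwell-dominate an asymmetric experiment carrying an extreme likelihood ratio, so the reduction via Blackwell's theorem is not automatic. I would handle this in one of two ways. The cleanest is to note that it is without loss to take the constrained strategies symmetric — the Claim is only a vehicle for the lower bound in Appendix~\ref{sec:Bounding-Performance-of}, whose specific constrained strategy is symmetric — in which case every experiment in sight is symmetric and the Blackwell reduction applies verbatim. Failing that, I would prove the matching-objective comparison directly, exploiting that the smoothing by the Gaussian signal $s_i$ renders the extreme likelihood ratios of $a_k$ non-pivotal; this amounts to verifying an inequality between two explicit integrals of $\Phi$ over the likelihood-ratio thresholds of the two experiments, which is the delicate computation on which the general case turns.
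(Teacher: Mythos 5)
Your proposal follows the same route as the paper's proof: induction on the agent index, conditioning on the realized neighborhood and focusing on the most recent neighbor $k$ (using that agent $i$'s link realizations are independent of $(\omega,a_k,a'_k)$), and then having the rational agent weakly beat the constrained one by simulating a copy of $a_k$ from $a'_k$ and feeding it into $\sigma_i$. The difference lies in how that simulation step is justified, and here you are more careful than the paper. The paper asserts outright that because $\mathbb{P}[a'_k=\omega]\geq\mathbb{P}[a_k=\omega]$, the rational agent can garble $a'_k$ into a variable with the same joint law with $\omega$ as $a_k$; but the existence of such a garbling is exactly Blackwell dominance, and for binary experiments an accuracy ranking implies Blackwell dominance only when both experiments are symmetric. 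The obstacle you flag in the asymmetric case is thus a genuine gap in the paper's own argument, not a defect peculiar to your write-up: if, say, some constrained $\sigma_k$ is the lopsided rule that guesses $1$ only when $s_k>5$, then $a_k$ has a likelihood ratio at $a_k=1$ so extreme that no garbling of the (symmetric, far more accurate) equilibrium action $a'_k$ reproduces its joint distribution with $\omega$, so the paper's garbling sentence fails there even though the Claim's conclusion may well still hold. Your repair \textemdash{} carrying symmetry of the PBE actions through the induction (legitimate given the symmetric prior and signal structure plus uniqueness of the PBE) and restricting to symmetric constrained strategies \textemdash{} is exactly what closes the argument, and as you observe the restriction is harmless for the Claim's purpose, since the optimal constrained strategy computed numerically in Appendix \ref{sec:Bounding-Performance-of} is symmetric. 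The fully general asymmetric case, which you honestly leave as a delicate unfinished computation, is likewise unhandled (indeed unacknowledged) in the paper.
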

\begin{proof}
The proof is by induction on $i$ and the base case of $i=1$ is clear.
Suppose the claim holds for $i=1,...,n$. Conditional on agent $n+1$
observing no predecessors, the claim again holds as in the base case,
so we can check the claim conditional on $n+1$ observing at least
one neighbor.

Let $j$ be the most recent neighbor that $n+1$ observes. Then the
rational agent observes $s_{n+1}$, $a'_{j}$ for some $j\le n$,
and perhaps some other actions while the constrained agent only uses
$s_{n+1}$ and $a_{j}$ in decision-making, where $\mathbb{P}[a'_{j}=\omega]\mathbb{\geq P}[a_{j}=\omega]$
by the inductive hypothesis. By garbling the observed action $a'_{j}$,
the rational agent could construct a random variable with the same
joint distribution with $\omega$ as the less accurate action $a_{j}$.
Ignoring information other than $s_{n+1}$ and the garbled $a'_{j},$
the rational agent $n+1$ could therefore follow a strategy that does
as well as agent $n+1$ under the strategy profile $(\sigma_{i})$.
So we must have $\mathbb{P}[a'_{n+1}=\omega]\mathbb{\geq P}[a_{n+1}=\omega]$
when everyone uses the PBE strategy.
\end{proof}
We then numerically compute the values for $\mathbb{P}[a_{_{i}}=\omega]$
under the optimal constrained strategy, which are displayed in Table
\ref{fig:Lower-bounds}.

\begin{table}[H]
\begin{centering}
\begin{tabular}{|c|}
\hline 
agent number\tabularnewline
\hline 
\hline 
probability correct\tabularnewline
\hline 
\end{tabular}%
\begin{tabular}{|c|c|c|c|c|c|c|c|}
\hline 
33 & 34 & 35 & 36 & 37 & 38 & 39 & 40\tabularnewline
\hline 
\hline 
0.9685 & 0.9695 & 0.9705 & 0.9714 & 0.9723 & 0.9731 & 0.9739 & 0.9746\tabularnewline
\hline 
\end{tabular}
\par\end{centering}
\caption{\label{fig:Lower-bounds}Lower bounds on the accuracy of rational
agents on dense networks.}
\end{table}

\subsection{Performance of Naive Agents\label{subsec:Performance-of-Naive}}

Consider $40$ naive agents on a random network where each agent is
linked to each of her predecessors with probability $q$, i.i.d. across
link realizations. The signal structure and payoff structure match
the experimental design in Section \ref{subsec:Experimental-Design}.

We will compute the accuracy of each agent by a recursive calculation.
Because naive agents' actions do not depend on the order of predecessors,
behavior depends only on the number of agents who have played L and
the number of agents who have played R as well as the network. We
will compute the distribution over the number of agents from the first
$n$ who have played L and the number who have played R recursively.

Assume the state is R. Let $P(k,k')$ be the probability that $k$
of the first $n$ agents play L and $k'$ of the first $n$ agents
play R. We define $P(k,k')=0$ if $k<0$ or $k'<0.$ The posterior
log-likelihood of state R for a naive agent observing one action equal
to R (and no signal) is
\[
\ell=\frac{2}{\sigma^{2}}\cdot\frac{\mu+\sigma\phi(-\mu/\sigma)}{1-\Phi(-\mu/\sigma)},
\]
where $\Phi$ and $\phi$ are the distribution function and probability
density function of a standard Gaussian random variable, respectively.

Then we have the recursive relation
\begin{eqnarray*}
P(k,k') & =P(k-1,k')\sum_{i\leq k-1,i'\leq k'}B(i,k-1,q)B(i',k',q)\Phi(\frac{\sigma(i-i')\ell-2\mu\sigma}{2})+\\
 & P(k,k'-1)\sum_{i\leq k,i'\leq k'-1}B(i,k,q)B(i',k'-1,q)[1-\Phi(\frac{\sigma(i-i')\ell-2\mu\sigma}{2})],
\end{eqnarray*}
where $B(i,k,q)$ is the probability a binomial distribution with
parameters $k$ and $q$ is equal to $i$. The first summand gives
the probability of agent $k+k'$ choosing L after $k-1$ predecessors
choose L and the remainder choose R, and the second summand gives
the probability of agent $k+k'$ choosing R after $k$ predecessors
choose L and the remainder choose R. The binomial coefficients correspond
to the possible network realizations. Here we use naive inference,
which implies that only the number of observed agents choosing each
action matters for behavior and not their order.

From these distributions $P(\cdot,\cdot)$ we can compute the probability
that agent $n$ chooses the correct action R:
\[
\sum_{k=0}^{n}P(k,n-k)\sum_{i\leq k,i'\leq n-k}B(i,k,q)B(i',n-k,q)[1-\Phi(\frac{\sigma(i-i')\ell-2\mu\sigma}{2})].
\]
These probabilities, which we compute numerically, are displayed in
Table \ref{fig:Naive-values} for agents $33$ through $40$. 
\begin{table}[H]
\begin{centering}
\begin{tabular}{|c|}
\hline 
agent number\tabularnewline
\hline 
\hline 
accuracy with $q=1/4$\tabularnewline
\hline 
accuracy with $q=3/4$\tabularnewline
\hline 
\end{tabular}%
\begin{tabular}{|c|c|c|c|c|c|c|c|}
\hline 
33 & 34 & 35 & 36 & 37 & 38 & 39 & 40\tabularnewline
\hline 
\hline 
0.8773 & 0.8780 & 0.8786 & 0.8792 & 0.8797 & 0.8801 & 0.8805 & 0.8808\tabularnewline
\hline 
0.7768 & 0.7768 & 0.7768 & 0.7768 & 0.7768 & 0.7768 & 0.7768 & 0.7768\tabularnewline
\hline 
\end{tabular}
\par\end{centering}
\caption{\label{fig:Naive-values}The accuracy of naive agents on sparse and
dense networks.}
\end{table}

\section{\label{sec:figs_and_tables} Relegated Figures and Tables}

\begin{figure}[H]
\begin{centering}
\includegraphics[scale=0.8]{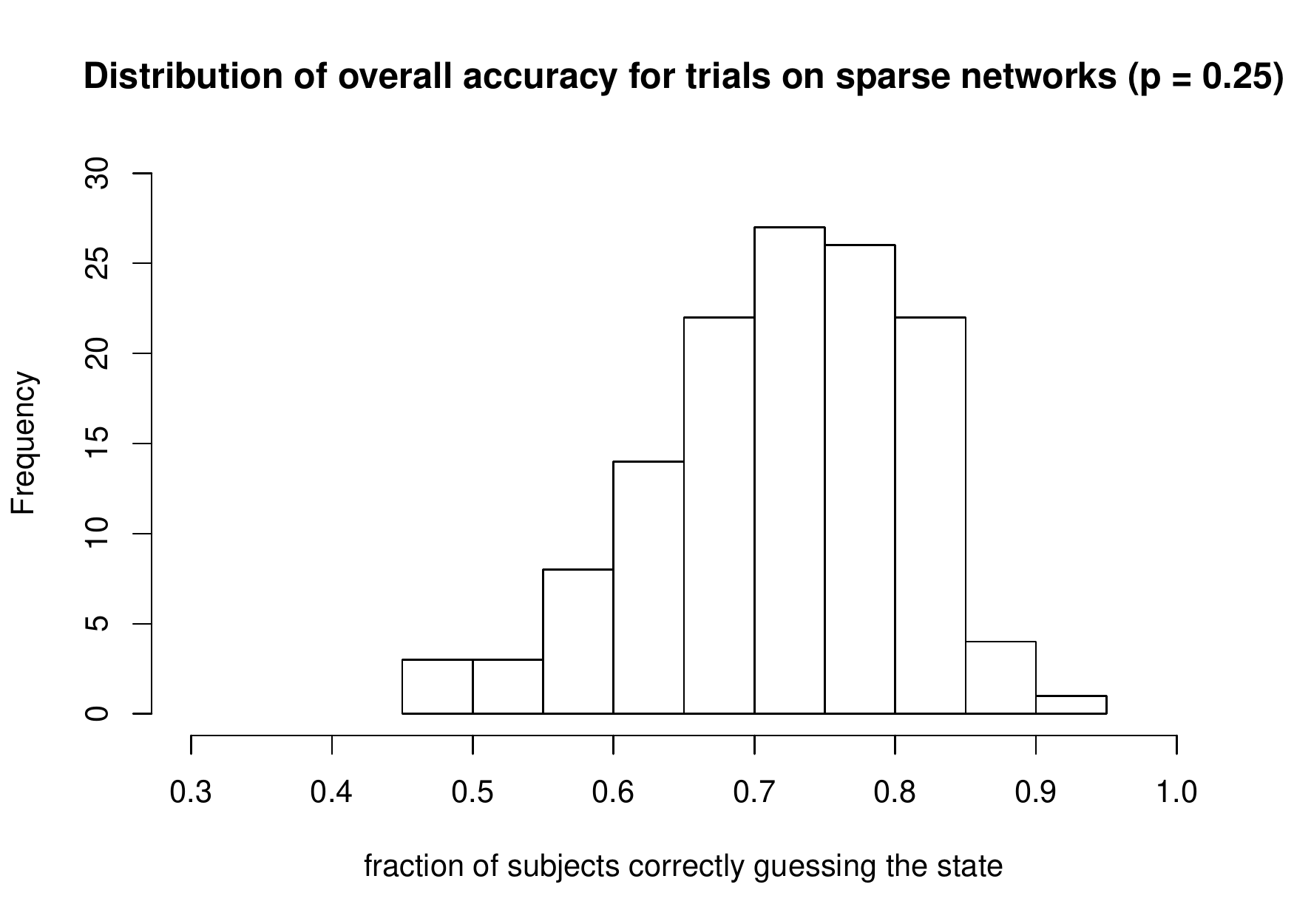} \includegraphics[scale=0.8]{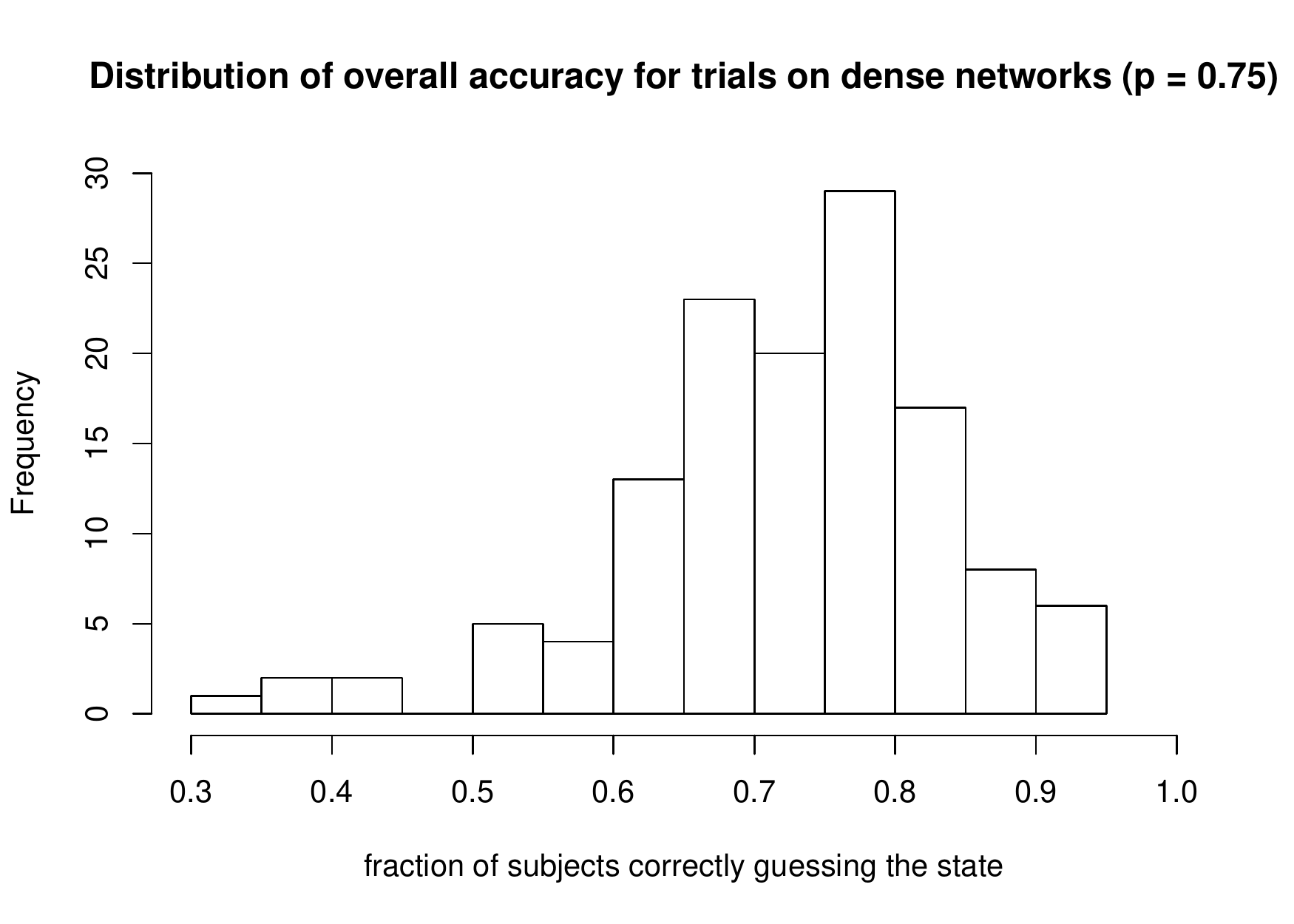}
\par\end{centering}
\caption{\label{fig:Histograms}Histograms of fractions of agents correctly
guessing the state}
\end{figure}

\begin{table}[H]
\begin{center} \begin{tabular}  {@{\extracolsep{5pt}}lc}  \\[-1.8ex]\hline  \hline \\[-1.8ex]   & \multicolumn{1}{c}{\textit{Dependent variable:}} \\  \cline{2-2}  \\[-1.8ex] & FractionCorrect \\  \hline \\[-1.8ex]   MisleadingEarlySignals & 0.014 \\    & (0.017) \\    & \\   NetworkDensity & 0.033 \\    & (0.082) \\    & \\   MisleadingEarlySignals$\times$NetworkDensity & $-$0.050$^{*}$ \\    & (0.030) \\    & \\   Constant & 0.768$^{***}$ \\    & (0.045) \\    & \\  \hline \\[-1.8ex]  Observations & 260 \\  R$^{2}$ & 0.040 \\  Adjusted R$^{2}$ & 0.029 \\  Residual Std. Error & 0.163 (df = 256) \\  F Statistic & 3.566$^{**}$ (df = 3; 256) \\  \hline  \hline \\[-1.8ex]  \textit{Note:}  & \multicolumn{1}{r}{$^{*}$p$<$0.1; $^{**}$p$<$0.05; $^{***}$p$<$0.01} \\   \end{tabular}  \end{center}

\caption{\label{tab:Regression_misleading} Effect of misleading early signals.}
\end{table}

\newpage{}
\begin{center}
\textbf{\Large{}Online Appendix}{\Large\par}
\par\end{center}

\section{Experimental Instructions\label{subsec:Appendix-Experimental-Design}}

Instructions and an example choice follow. To avoid confusion, the
instructions were modified for player $1$ in each round to exclude
discussion of social observations. A sample experiment can be completed
online at \texttt{\href{https://upenn.co1.qualtrics.com/jfe/form/SV_42dq2J2wHO30zA1}{https://upenn.co1.qualtrics.com/jfe/form/SV\_42dq2J2wHO30zA1}}

\includegraphics[scale=0.65]{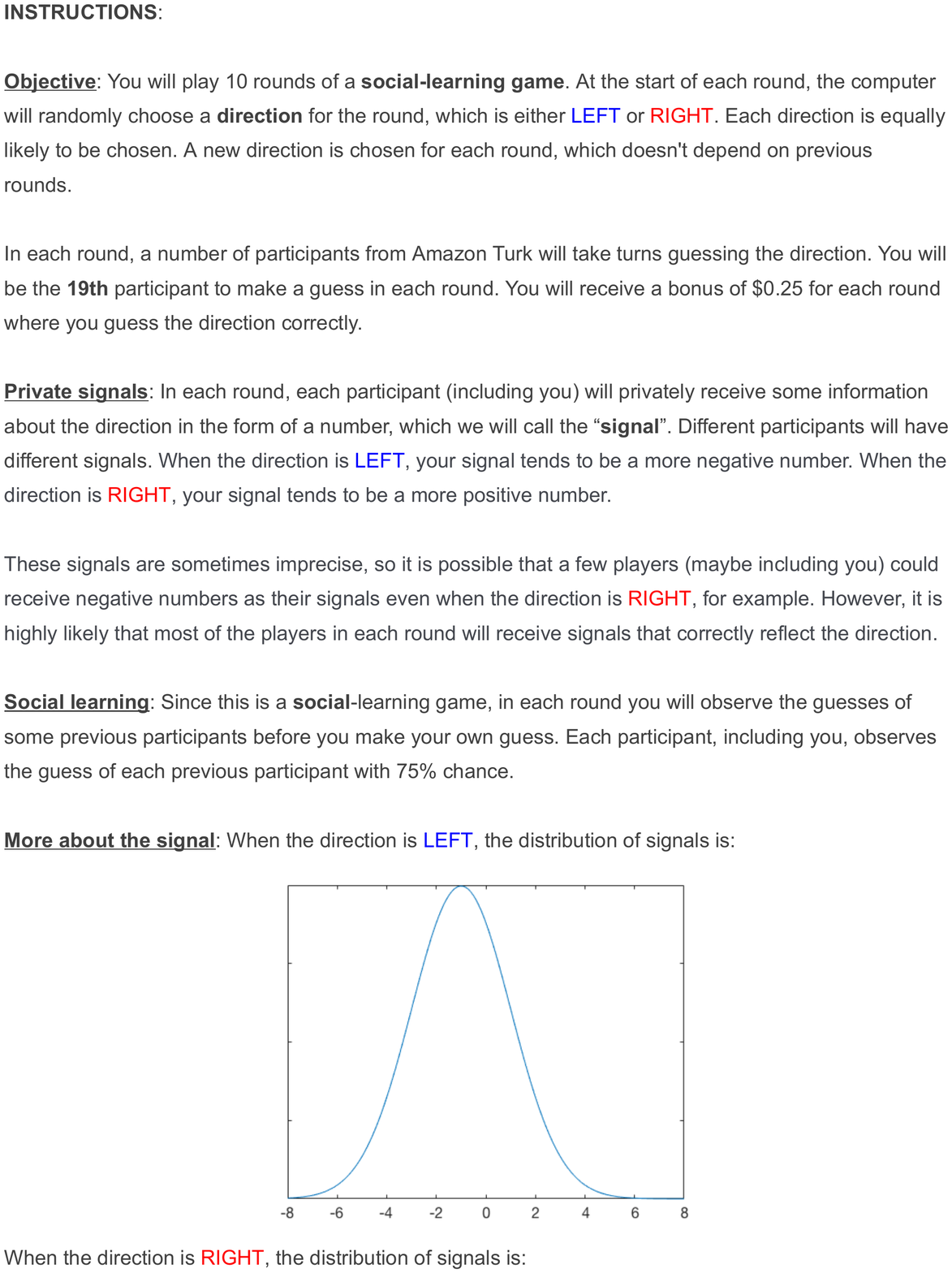}\newpage\includegraphics[scale=0.75]{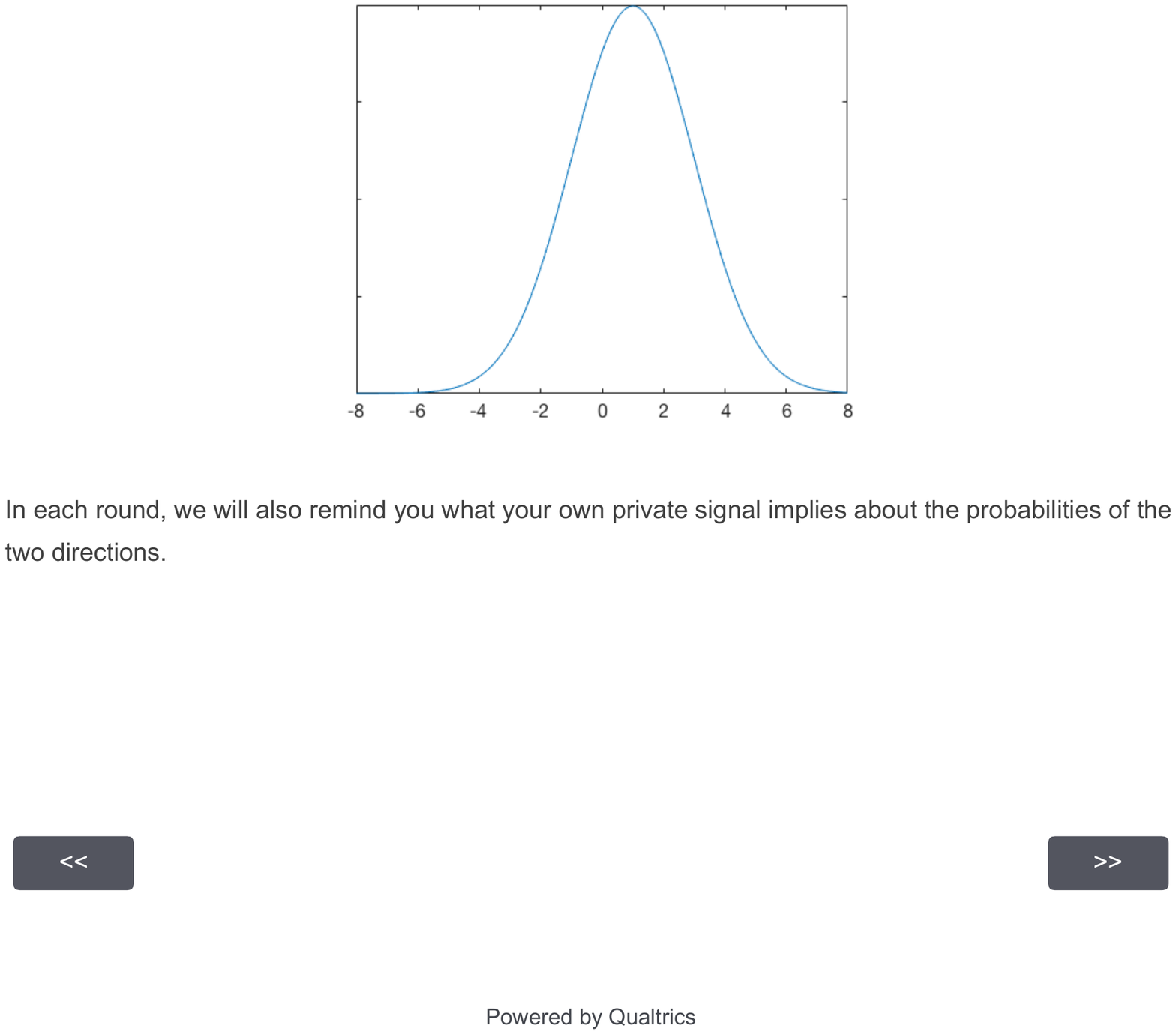}

\includegraphics[scale=0.7]{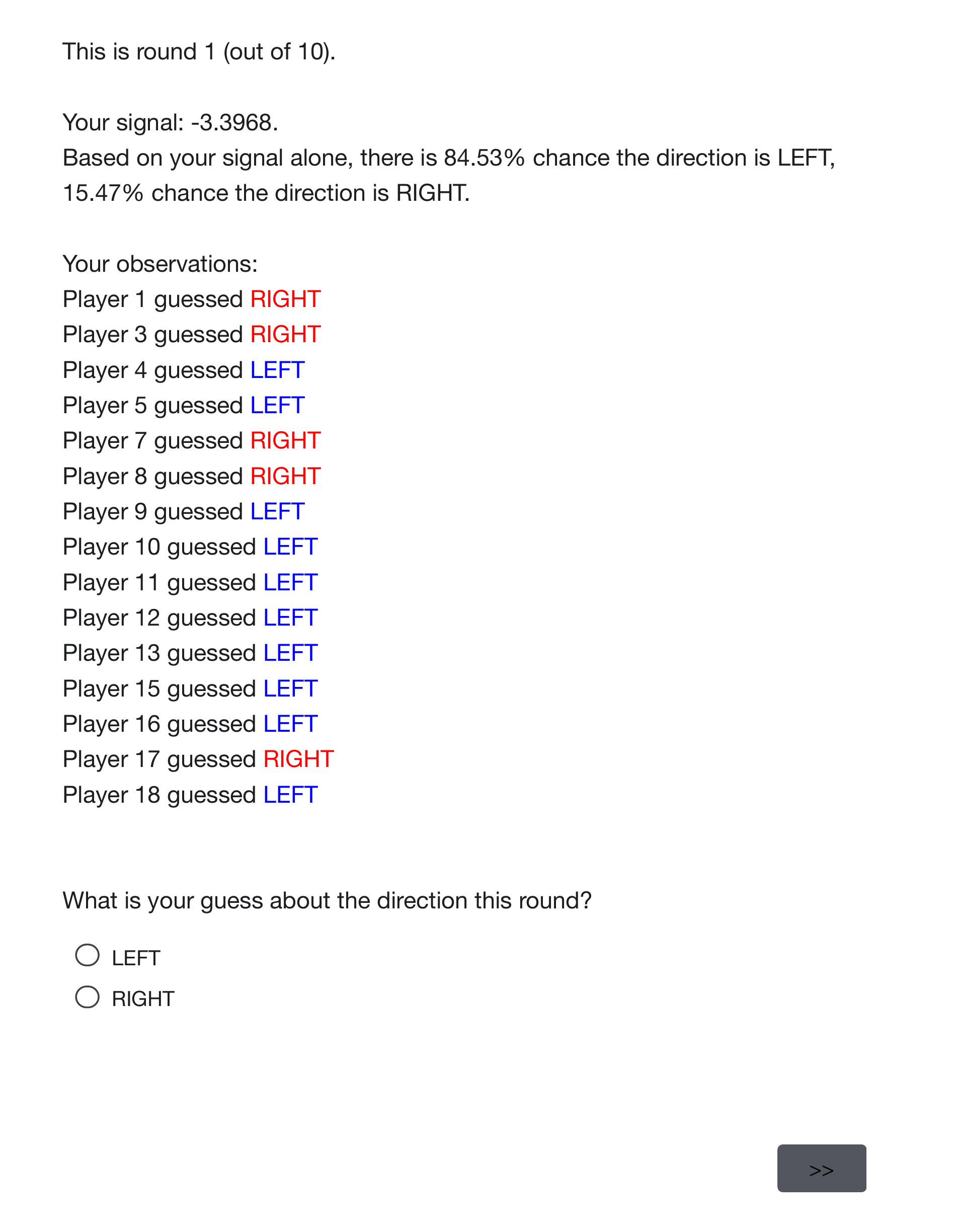}

\newpage{}

\section{Pre-Registration Documents \label{subsec:Pre-Registration-Document}}
\begin{center}
\includegraphics[scale=0.6]{yp6eq}
\par\end{center}

\begin{center}
\includegraphics[scale=0.85]{ag8fr}
\par\end{center}
\end{document}